\newtheorem{assumption}{Assumption}
\newtheorem{theorem}{Theorem}
\newtheorem{proposition}{Proposition}
\title[Sensitivity Analysis when outcome is censored by death in IV models]{Sensitivity analyses for average treatment effects when outcome is censored by death in instrumental variable models}
\author{Kwonsang Lee$^{1}$, Scott A. Lorch$^{2}$, and Dylan S. Small$^{3}$}
\address{$^{1}$ Department of Biostatistics, Harvard School of Public Health, {\tt kwonsanglee.stat@gmail.com}}
\address{$^{2}$ Center for Outcomes Research, Children's Hospital of Philadelphia}
\address{$^{3}$ Department of Statistics, University of Pennsylvania}
\begin{document}

\begin{abstract}
Two problems that arise in making causal inferences for non-mortality outcomes such as bronchopulmonary dysplasia (BPD) are unmeasured confounding and censoring by death, i.e., the outcome is only observed when subjects survive. In randomized experiments with noncompliance, instrumental variable methods can be used to control for the unmeasured confounding without censoring by death.  But when there is censoring by death, the average causal treatment effect cannot be identified under usual assumptions, but can be studied for a specific subpopulation by using sensitivity analysis with additional assumptions. However, in observational studies, evaluation of the local average treatment effect (LATE) in censoring by death problems with unmeasured confounding is not well studied. We develop a novel sensitivity analysis method based on instrumental variable models for studying the LATE. Specifically, we present the identification results under an additional assumption, and propose a three-step procedure for the LATE estimation. Also, we propose an improved two-step procedure by simultaneously estimating the instrument propensity score (i.e., the probability of instrument given covariates) and the parameters induced by the assumption. We have shown with simulation studies that the two-step procedure can be more robust and efficient than the three-step procedure. Finally, we apply our sensitivity analysis methods to a study of the effect of delivery at high-level neonatal intensive care units on the risk of BPD. 
\end{abstract}

\keywords{causal inference, covariate balancing propensity score, neonatal intensive care, non-mortality outcome, observational study, perinatal regionalization}



\maketitle

\doublespace

\section{Introduction}\label{s:intro}

\subsection{Does delivery at a high-level neonatal intensive care unit (NICU) really increase the risk of bronchopulmonary dysplasia (BPD)? }

Regionalization of health care provides high-quality and specialized care to manage a given type of illness for a targeted population. Within pediatric medicine, premature newborn infants or high-risk mothers are directed to facilities such as neonatal intensive care units (NICUs) with technical expertise and appropriate capabilities. Regionalized perinatal systems were developed in 1970s. During the subsequent decades, NICUs began to save very low birth weight (VLBW) (i.e., birth weight < 1500g) infants, and neonatal mortality rates consistently decreased as the number of high-risk mothers transferred to high-level NICUs increased. However, along with changes in the economics of health care, regionalized perinatal care began to weaken in many areas of the United States by the 1990s \cite{howell2002} and NICU services began to diffuse from regional centers to community hospitals. 

While many studies have shown that delivery at high-level NICUs reduces mortality rates \cite{phibbs2007, baiocchi2010, lorch2012, guo2014}, other studies that have examined the non-mortality outcome of bronchopulmonary dysplasia (BPD) have provided conflicting evidence about whether high level NICUs are effective in preventing it \cite{warner2004, lapcharoensap2015}. BPD is a chronic lung disease that results from damage to a premature infant' lungs by a breathing machine or long-term use of oxygen. A naive analysis of the Pennsylvania NICU data shows a positive association between delivery at high-level NICUs and BPD; the prevalence of BPD is 2.48\% (high-level NICUs) and 0.68\% (low-level NICUs) in the premature infant population. This comparison implies a substantial increase in the risk of BPD when delivered at high-level NICUs. However, it does not account for mortality rates simultaneously, which could lead to a biased analysis of the effect of delivery at high-level NICUs. For instance, there may exist some premature infants who could survive only if they are delivered at high-level NICUs, but could not avoid other complications. In this case, it is possible that a lower risk of later complications in low-level NICUs might be estimated while estimating a higher mortality rate. When considering mortality-morbidity composite outcomes, a recent study showed that delivery at high-levels indeed lowers the risk of death or other complications \cite{jensen2015}, contrary to the naive analysis.  

In this paper, we focus on estimating the effect of delivery at high-level NICUs on the risk of BPD by accounting for death. More precisely, without considering mortality-morbidity composite outcomes, we seek to answer the question ``Does delivery at high-level NICUs lower the risk of BPD as it lowers the risk of death?'' The question has not been well studied. This is mainly because mortality rates are considered as the primary outcome of interest, but also because other complications of premature infants are observed only after they survived. The latter problem is often called ``censoring by death'' problems. As discussed by many authors \cite{robins1995, frangakis2002}, a meaningful causal effect of the treatment is defined in the subset of premature infants that would survive regardless of whether they were delivered at high-level NICUs or low-level NICUs. This subset selection has been referred to as principal stratification \cite{frangakis2002}, and this subset is called the \textit{always-survived} principal stratum. Methods have been developed for estimating causal treatment effects in this stratum under certain model assumptions. Some of the methods propose sensitivity analysis methods by examining model assumptions with various values of the model parameters \cite{gilbert2003, shepherd2006, shepherd2007, jemiai2007}. Other methods provide bounds on the treatment effects with or without additional assumptions \cite{imai2008, yang2016}.

For estimating the treatment effect in the \textit{always-survived} stratum, most of the discussed methods have been developed in randomized experiments. However, in observational studies such as the NICU example, medical decisions are up to mothers, possibly with suggestions from doctors. Therefore, treatment cannot be assigned at random by researchers. A naive comparison between high-level and low-level NICUs can be misleading because high-risk mothers tend to be admitted to high-level NICUs and high-risk premature infants tend to be transferred to high-level NICUs. To account for such confounders, instrumental variable (IV) models can be considered by allowing for consistent estimation of the treatment effect. When there is a proper variable that can serve as an IV (we will discuss the IV assumptions in more details in Section~\ref{ss:notation}), the treatment effect can be consistently estimated for a subpopulation (the \textit{compliers}) in an observational study. The \textit{compliers} are the subpopulation that would take a treatment only if the treatment is encouraged. For the combined stratum of the \textit{always-survived} stratum and the \textit{compliers} subpopulation, we develop a sensitivity analysis method for the causal effect of delivery at high-level NICUs on the risk of BPD. 

\subsection{Data: travel time, NICU level, survival, BPD and covariates}
\label{ss:data}

The data consists of information on premature infants in Pennsylvania between 1995-2005 obtained from combining birth certificates, death certificates and hospital records \cite{lorch2012}. We consider two study populations from the data: (1) premature infants with a birth weight between 400-8000g and a gestational age between 23-37 weeks and (2) infants with a birth weight between 500-1500g and a gestational age between 23-37 weeks. For brevity, we call the first study population the \textit{premature infant} population, and call the second study population the \textit{VLBW infant} population. The VLBW infant population is the subset of the premature infant population. The first population can provide an overall inference about the treatment effect for general premature infants, and the second population is more targeted to the subset of VLBW infants who are already at a high risk of death or other complications. There are 174,878 premature infants in the data for the premature infants population, and there are 13,658 infants for the VLBW infants population. 

To compare the effectiveness of neonatal care provided by different levels of NICUs, we define a binary treatment variable with two levels, high-level and low-level NICUs. Based on previous studies \cite{phibbs1996, phibbs2007}, a high-level NICU is defined as a level 3 NICU with high-volume, which means that they have the capacity for sustained mechanical assisted ventilation and a minimum of 50 premature infants per year. We define a NICU as a low-level NICU if either it was below level 3 or it delivered an average of fewer than 50 premature infants per year. Also, we define a survival indicating variable as the opposite of in-hospital mortality. The in-hospital mortality is measured with two metrics, deaths during the initial hospitalization (neonatal deaths) and fetal deaths with a gestational age$\geq$ 23 weeks and a birth weight $\geq$ 400g. In addition, we consider a binary outcome of BPD that was obtained from the ICD-9-CM codes. In case when an infant did not survive, we assume that the BPD outcome of the infant is undefined. 

To make causal inference from the data, we consider the \textit{excess travel time} as an IV. The excess travel time is the time difference between the time traveling to the nearest high-level NICU and the time traveling to the nearest low-level NICU from a mother's residence. This time variable is calculated by using ARcView software (ESRI). Since a high-risk mother is more likely to deliver at a high-level NICU when it is close to home, we use this proximity to high-level NICUs compared to low-level NICUs. Specifically, we use a binary variable that indicates whether the excess travel time is less than or equal to five minutes. The validity of the excess travel time as an IV is discussed in previous studies \cite{baiocchi2010, guo2014}. Also, to control for confounders, we consider the following set of covariates: gestational age, birth weight, prenatal care, health insurance, single birth, parity, mother's age, race, education, and mother's complications.

\section{Notation, Causal Estimand, and Identification}
\label{s:framework}

\subsection{Notation}
\label{ss:notation}

Consider a study with $N$ premature infants. Each infant has an independent and identically distributed sample $(\mathbf{X}_i, Z_i, D_i, S_i, Y_i)$ from a given population of interest with $\mathbf{X}_i$ denoting a set of covariates, $Z_i$ a binary instrument, $D_i$ a binary treatment, $S_i$ a survival indicator, and $Y_i$ an outcome of interest. Let $Z_i=1$ if subject $i$ received encouragement to take a treatment, and $Z_i=0$ if subject $i$ did not receive encouragement. The encouragement may not be randomized, and may depend on covariates. Let $D_i=1$ if subject $i$ actually received a treatment and $D_i=0$ if subject $i$ did not. The survival indicator $S_i$ is an intermediate variable after receiving treatment (often, called a \textit{post-treatment} variable) denoting $S_i=1$ if infant $i$ survived and $S_i=0$ if did not survive. The outcome $Y_i$ is measured only when infant $i$ survived (i.e., $S_i=1$) otherwise $Y_i$ is undefined or missing.

To define our causal estimand, we adopt the potential outcome framework proposed by Neyman \cite{neyman1923} and Rubin \cite{rubin1974}. Define $D_i(1)$ to be the treatment indicator if $Z_i$ were to be set to 1 and $D_i(0)$ to be the value if $Z_i$ were to be set to 0. Similarly, $S_i(z, d)$ to be the survival indicator if $Z_i$ were to be set to $z$ and $D_i$ were to be set to $d$. Also, $Y_i(z, d, s)$ to be the outcome if $(Z_i, D_i, S_i)$ were to be set to $(z, d, s)$.  Since the outcome $Y_i$ is undefined when infant $i$ could not survive, we define $Y_i(z, d, 0) = * $. As discussed in the IV literature, the following assumption is considered. 

\begin{assumption}
	The IV assumptions are as follows:
	\begin{itemize}
		\item[(i)] Stable Unit Treatment Value Assumption (SUTVA) 
		
		\item[(ii)] Exclusion restriction: $S_i(0, d) = S_i(1, d)$ for $d=0$ or $1$ and $Y_i(0,d, s) = Y_i(1, d, s)$ for $\forall (d, s)$. 
		
		\item[(iii)] Monotonicity: $D_i(1) \geq D_i(0)$ and $S_i(z, 1) \geq S_i(z, 0)$ for $\forall z$. 
		
		\item[(iv)] Instrumentation:$P(S(z,1)=S(z,0)=1, D(0) < D(1)) \geq \epsilon > 0, \forall z$ .

		\item[(v)] The instrumental variable $Z_i$ is independent of the potential outcomes $Y_i(z,d,s)$, the potential survival indicator $S_i(z, d)$, and the potential treatment $D_i(z)$ conditioning on the covariates $\mathbf{X}$. 
		\[
		Z_i \perp\!\!\!\perp \Big( Y_i(z, d, s), S_i(z, d), D_i(z)  \Big) | \mathbf{X}
		\]
		
		\item[(vi)] Positivity: $0 < P(Z=1 | \mathbf{X}) < 1$. 
		
	\end{itemize} 
	\label{assumption1}
\end{assumption}

\begin{table}
\caption{\label{tab:description}Description of subpopulation. Upper left: the first compliance class $C_1$,  Upper right: the second compliance class $C_2$, and Lower: the combined compliance class $(C_1, C_2)$ for survived subjects.}
	\parbox{.45\linewidth}{
		\centering
		\begin{tabular}{l c | c c}
			\hline
			& & Low & High \\
			& & D=0 & D=1 \\
			\hline
			Far & Z=0 & NT/CO & AT \\
			Close & Z=1 & NT & AT/CO \\
			\hline
		\end{tabular}
		
	}
	\hfill
	\parbox{.45\linewidth}{
		\centering
		\begin{tabular}{l c | c c}
			\hline
			& & Death & Survive \\
			& & S=0 & S=1 \\
			\hline
			Low & D=0 & NS/PR & AS \\
			High & D=1 & NS & AS/PR\\
			\hline
		\end{tabular}
	}
	\begin{center}
		\begin{tabular}{c|cc|cc}
		\hline
		& \multicolumn{2}{c|}{$S=1, D=0$} & \multicolumn{2}{c}{$S=1, D=1$} \\
		\hline
		$Z=0$ & NT & CO & AT & \\
		& (AS) & (AS) & (AS/PR) & \\
		\hline
		$Z=1$ & NT & & AT & CO \\
		& (AS)& & (AS/PR) & (AS/PR) \\
		\hline
	\end{tabular}
	\end{center}
\end{table}

Assumption~\ref{assumption1} (i) is that the outcome (treatment) for infant $i$ is not affected by the values of the treatment or instrument (instrument) for other infants and that the outcome (treatment) does not depend on the way the treatment or instrument (instrument) is administered. This SUTVA assumption allows us to use the notation $Y_i(z, d, s)$ (or $S_i(z, d), D_i(z)$) which means that the outcome (treatment) for infant $i$ is not affected by the values of the treatment and instrument (instrument) for other infants. See Angrist, Imbens, and Rubin \cite{angrist1996} for more discussions of the SUTVA assumption. Assumption~\ref{assumption1} (ii) assures that any effect of $Z$ on $S$ must be through an effect of $Z$ on $D$. Under this assumption, we can write the potential survival indicator as $S_i(d)$ instead of $S_i(z,d)$, and we can reduce the form of $Y_i(z, d, s)$ as $Y_i(d, s)$. Regarding Assumption~\ref{assumption1} (iii), there are three possible compliance classes $C_1$ depending on $D_i(0)$ and $D_i(1)$: \textit{Always-takers (AT)} if $D_i(0)=D_i(1)=1$, \textit{Never-takers (NT)} if $D_i(0)=D_i(1)=0$, and \textit{Compliers (CO)} if $D_i(0) < D_i(1)$. The upper left table of Table~\ref{tab:description} shows what the population consists of in terms of three classes of $C_1$. Similarly, there are three possible compliance classes $C_2$ depending on $S_i(0)$ and $S_i(1)$: \textit{Always-survived (AS)} if $S_i(0)=S_i(1)=1$, \textit{Never-survived (NS)} if $S_i(0)=S_i(1)=0$, and \textit{Protected (PR)} if $S_i(0) < S_i(1)$. The upper right table of Table~\ref{tab:description} shows what the population consists of in terms of three classes of $C_2$. We define $C$ as a composite compliance class, which is $C = C_1 \times C_2$. For simplicity, for example, we denote $C=$\textit{CO-AS} if $C_1=$\textit{CO} and $C_2=$\textit{AS}. Assumption~\ref{assumption1} (iv) imposes a positive value of the proportion of the \textit{compliers} \& \textit{always-survived} stratum, called the \textit{CO-AS} stratum. Assumption~\ref{assumption1} (v) will be satisfied if the confounders between $Z_i,  S_i, Y_i$ are controlled by the covariate $\mathbf{X}_i$. Assumption~\ref{assumption1} (vi) implies that every infant has positive values of receiving encouragement.

\subsection{Causal Estimand}

We can define the causal treatment effect on $Y$ based on two potential outcome $Y(1,1)$ and $Y(0,1)$. The difference $Y(1,1)-Y(0,1)$ is measured and considered as our treatment effect. However, for a binary outcome, other measures such as risk ratio or odds ratio can be considered. When the outcome is censored by death, the potential outcomes $Y(1, 0)$ and $Y(0,0)$ are undefined as discussed in Section 2.1. We argue that it may not be meaningful to compare $Y(1, 0)$ and $Y(0,0)$ since it is an outcome comparison when an infant would not survive. Therefore, we restrict our attention to those who would always survive regardless of treatment status. For infants in the \textit{AS} stratum shown in the upper right table of Table~\ref{tab:description},  the risk difference $Y(1,1)-Y(0,1)$ can be used to quantify the treatment effect. Other risk measures such as the risk ratio or the odds ratio can be considered. 

To identify the causal treatment effect, we need to consider a more refined stratum than the \textit{AS} stratum. Since treatment is not randomized and only encouragement is administered to stimulate subjects to take a treatment, our attention is further restricted to those who respond to encouragement. As in the usual IV analysis, the treatment effect can only be identified for this subgroup, the \textit{CO} stratum. To combine censoring by death with the IV approach, we can focus on the composite compliance class, the \textit{CO-AS} stratum. Our goal in this paper is making inference for the local average treatment effect (LATE) for the \textit{CO-AS} stratum. Using the risk difference measure, the LATE is defined as
\[
LATE = E(Y(1,1)-Y(0,1) | S(0)=S(1)=1, D(0)<D(1)) = E(Y(1,1)-Y(0,1)|C=\textit{CO-AS}).
\]
For cases when censoring by death does not occur, the average treatment effect for the \textit{CO} stratum can be identified with the IV analysis. As shown in the upper left table of Table~\ref{tab:description}, the usual IV analysis separates the information for compliers without treatment from the cell of $Z=0, D=0$ by taking out the information for never-takers using the cell of $Z=1, D=0$. Similarly, the information for compliers with treatment can be obtained from the cells of $Z=1, D=1$. However, when outcome is censored, this analysis cannot be applied, thus needs modification. The lower table of Table~\ref{tab:description} shows the five composite compliance classes $C$. From this table, using the same technique as the IV analysis, the information for the \textit{CO-AS} stratum without treatment can be obtained, but the information for the \textit{CO-AS} stratum with treatment cannot. Only the mixed information for the \textit{CO-AS} (with treatment) and \textit{CO-PR} strata can be obtained. The following assumption enables the information for the \textit{CO-AS} stratum with treatment to be separated, thus identified. 

\begin{assumption}
	Assume a the model for mixing probabilities of the \textit{always-survived (AS)} and \textit{protected (PR)} in the treated compliers, 
	\begin{equation}
	\Pr(S(0)=1|S(1)=1, D(0)<D(1), Y(1, 1)) = w(Y(1,1); \beta)
	\label{model1}
	\end{equation}
	where $w(y; \beta) = \Phi(\alpha + g(y; \beta ))$, $\beta$ is fixed and known, $\Phi(\cdot)$ and $g(\cdot)$ are  known function, but $\alpha$ is not specified.  
	
	\label{assumption2}
\end{assumption}
\noindent The probability in model~\eqref{model1} can be represented as $\Pr(C=\textit{CO-AS} \>|\> C = \textit{CO-AS} \>\>\text{or}\>\> \textit{CO-PR}, Y(1,1))$. The mixing probabilities depend on the levels of the potential outcome $Y(1,1)$, which restricts the relationship of outcome between the \textit{CO-AS} and \textit{CO-PR} strata. For instance, if the potential outcome distributions for the $C=\textit{CO-AS}$ and $C=\textit{CO-PR}$ are identical, the function $w$ does not depend on $Y(1,1)$.

 The parameter $\beta$ cannot be identified from the observed data. We propose regarding $\beta$ as fixed and known, and then investigating the impact of $\beta$ on estimation of the LATE by considering a plausible range of $\beta$. Shepherd et al.\cite{shepherd2006} considered a similar assumption in the setting when a treatment is randomized while including covariates $X$. In our model, we use marginal mixing probabilities. 

\subsection{Identification}
\label{ss:identification}

In this subsection, we describe identification results under Assumption~\ref{assumption1} and \ref{assumption2}. 

\begin{theorem} \normalfont
	Under Assumption~\ref{assumption1} and \ref{assumption2}, 
	\begin{itemize}
		\item[(i)] The average potential outcome $Y(0,1)$ for the \textit{CO-AS} stratum is identified, 
		$$
		E(Y(0,1) | S(0)=S(1)=1, D(0)<D(1)) = \frac{E\left[\frac{YS(1-D)(1-Z)}{1-e(X)}\right] - E\left[ \frac{YS(1-D)Z}{e(X)}\right]}{\Pr(S(0)=S(1)=1, D(0)<D(1))}
		$$
		where $e(X)=\Pr(Z=1|X)$. 
		
		\item[(ii)] For a known value of $\beta$, the average potential outcome $Y(1,1)$ for the \textit{CO-AS} stratum is identified, 
		$$
		E(Y(1,1) | S(0)=S(1)=1, D(0)<D(1)) = \frac{E\left[ \frac{Y w(Y; \beta) SDZ}{e(X)}\right] - E\left[ \frac{Y w(Y; \beta) SD(1-Z)}{1-e(X)} \right]}{\Pr(S(0)=S(1)=1, D(0)<D(1))}.
		$$
		
		\item[(iii)] The proportion of the \textit{CO-AS} stratum can be identified, 
		$$
		\Pr(S(0)=1, D(0)<D(1)) = {E}\left[\frac{S(1-D)(1-Z)}{1-{e}(X)}\right] - {E}\left[
		\frac{S(1-D)Z}{{e}(X)}\right].
		$$

	\end{itemize}
	
	\label{thm1}
\end{theorem}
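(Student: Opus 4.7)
The plan is to express each observed-data expectation appearing in the three identities as an average of potential outcomes over specific principal strata, using Assumption~\ref{assumption1}(v) to handle the inverse probability weights $1/e(X)$ and $1/(1-e(X))$, and then combining the two terms in each formula so that contributions from non-compliers cancel. Throughout I would invoke the exclusion restriction in Assumption~\ref{assumption1}(ii) to drop the $z$-argument from $S$ and $Y$, and monotonicity in Assumption~\ref{assumption1}(iii) to conclude that on $\{S=1\}$ the survival class must be \textit{AS} for never-takers and compliers alike.

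For part (iii), conditioning on $X$ and applying Assumption~\ref{assumption1}(v) gives
\[
E\!\left[\frac{S(1-D)(1-Z)}{1-e(X)}\right]=E\bigl[\Pr(S=1,D=0\mid Z=0,X)\bigr]=\Pr(S(0)=1,\,D(0)=0),
\]
which, by case analysis on $D(1)$ and monotonicity, equals $\Pr(\textit{NT-AS})+\Pr(\textit{CO-AS})$. The analogous identity with $Z=1$ yields $\Pr(\textit{NT-AS})$, so the difference is $\Pr(\textit{CO-AS})$. Part (i) proceeds by the same argument with an extra factor of $Y$: on $\{S=1,D=0\}$ the exclusion restriction forces the observed $Y$ to equal $Y(0,1)$ regardless of $Z$, so both weighted expectations become $E[Y(0,1)\,\mathbf{1}(\cdot)]$ over the corresponding strata; the \textit{NT-AS} contributions cancel, and dividing by the quantity from (iii) produces the claimed ratio.

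Part (ii) is the main obstacle because the same cancellation leaves a mixture, not a single stratum. Applying the reasoning above to $\{S=1,D=1\}$ and using monotonicity to rewrite $\mathbf{1}(D(1)=1)-\mathbf{1}(D(0)=1)$ as $\mathbf{1}(D(0)<D(1))$, the numerator in (ii) reduces to
\[
E\bigl[Y(1,1)\,w(Y(1,1);\beta)\,\mathbf{1}(C\in\{\textit{CO-AS},\textit{CO-PR}\})\bigr].
\]
The key step is to convert this mixture back to a pure \textit{CO-AS} integral. Let $f$ be the density of $Y(1,1)$ given $C\in\{\textit{CO-AS},\textit{CO-PR}\}$. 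Assumption~\ref{assumption2} identifies $w(y;\beta)$ as $\Pr(C=\textit{CO-AS}\mid Y(1,1)=y,\,C\in\{\textit{CO-AS},\textit{CO-PR}\})$, so Bayes' formula yields
\[
w(y;\beta)\,f(y)\,\Pr\bigl(C\in\{\textit{CO-AS},\textit{CO-PR}\}\bigr)=f_{Y(1,1)\mid C=\textit{CO-AS}}(y)\,\Pr(C=\textit{CO-AS}).
\]
Integrating $y$ against both sides gives $E[Y(1,1)\mid C=\textit{CO-AS}]\,\Pr(C=\textit{CO-AS})$, and dividing by the formula from (iii) completes the identification.
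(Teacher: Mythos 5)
Your proposal is correct and follows essentially the same route as the paper: inverse-probability weighting combined with Assumption~\ref{assumption1}(v) conditional on $X$ to turn observed moments into potential-outcome moments, monotonicity to make the $Z=1$/$Z=0$ differences isolate the complier strata, and for part (ii) the Bayes/density-ratio identity $w(y;\beta)\,f_{Y(1,1)\mid C\in\{\textit{CO-AS},\textit{CO-PR}\}}(y)\,\Pr(C\in\{\textit{CO-AS},\textit{CO-PR}\})=f_{Y(1,1)\mid C=\textit{CO-AS}}(y)\,\Pr(C=\textit{CO-AS})$, which is exactly the paper's step of writing $w(y;\beta)$ as the ratio of the two joint densities and applying the general weighted-moment identity with $g(y)=y\,w(y;\beta)$. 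The only cosmetic difference is your explicit principal-stratum bookkeeping versus the paper's algebraic use of $D(1)-D(0)$ and $1-D(0)$ versus $1-D(1)$.
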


\begin{proof}
	See Appendix~\ref{ss:proof_thm1}.
\end{proof}

From Theorem~\ref{thm1}, the average causal treatment effect for the \textit{CO-AS} stratum, $\{S(0)=S(1)=1, D(0)<D(1)\}$, can be identified as
\begin{align}
& E[Y(1,1) -Y(0,1) | S(0)=S(1)=1, D(0)<D(1)] \nonumber \\
&= \frac{\left( E\left[ \frac{Y w(Y; \beta) SDZ}{e(X)}\right] - E\left[ \frac{Y w(Y; \beta) SD(1-Z)}{1-e(X)} \right]\right) -\left( E\left[\frac{YS(1-D)(1-Z)}{1-e(X)}\right] - E\left[ \frac{YS(1-D)Z}{e(X)}\right] \right)}{E\left[\frac{S(1-D)(1-Z)}{1-{e}(X)}\right] - {E}\left[	\frac{S(1-D)Z}{{e}(X)}\right]}.
\label{eqn:late}
\end{align}
Equation~\eqref{eqn:late}  is similar to the identification results in Abadie \cite{abadie2003} and Tan \cite{tan2006}, but has a more complicated form since these authors did not consider censoring by death problems.

For the identification of the LATE, Assumption~\ref{assumption2} determined the mixing probabilities between the $C_1=CO, C_2=AS$ and $C_1=CO, C_2=PR$ subpopulations. Furthermore, it inherently implies one more relationship. The following proposition shows this relationship. 

\begin{proposition}
	Given the function $w(y; \beta)$, 
	\begin{align}
	E\left[\frac{w(Y; \beta) SDZ}{e(X)} \right] - E\left[\frac{w(Y; \beta) SD(1-Z)}{1-e(X)} \right] &= \Pr(S(0)=S(1)=1, D(0)<D(1)) \nonumber \\
	&= E\left[\frac{S(1-D)(1-Z)}{1-{e}(X)}\right] - {E}\left[	\frac{S(1-D)Z}{{e}(X)}\right].
	\label{eqn:prop}
	\end{align}
	\label{prop1}
\end{proposition}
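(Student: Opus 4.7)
The second equality in the statement of Proposition~\ref{prop1} is precisely part (iii) of Theorem~\ref{thm1}, which we are allowed to cite, so the entire content of the proposition lies in showing that the left-hand IPW expression equals $\Pr(S(0)=S(1)=1, D(0)<D(1))$. My plan is to first convert the IPW-type expression into an expectation over potential outcomes, then collapse it to the complier-and-always-survive cell, and finally integrate out $Y(1,1)$ using Assumption~\ref{assumption2}.

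First, by Assumption~\ref{assumption1}(v)--(vi) and iterated expectations conditional on $\mathbf{X}$, the weights $Z/e(X)$ and $(1-Z)/(1-e(X))$ remove the selection by $Z$, so that for any measurable $h$,
\[
E\!\left[\frac{h(Y)SDZ}{e(X)}\right] = E[h(Y)SD \mid Z=1] \text{-style reduction} = E\!\left[h\!\left(Y(D(1),S(D(1)))\right) S(D(1))\, D(1)\right],
\]
and analogously with $(1-Z)/(1-e(X))$ replacing $Z/e(X)$ and $D(0)$ replacing $D(1)$. The exclusion restriction in Assumption~\ref{assumption1}(ii) lets me write $S(d)$ and $Y(d,s)$ without a $z$ argument. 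The product $S(D(z)) D(z)$ is nonzero only when $D(z)=1$ and $S(1)=1$, on which event $Y = Y(1,1)$; hence $h(Y) SDZ/e(X)$ and $h(Y) SD(1-Z)/(1-e(X))$ reduce (after taking expectations) to $E[h(Y(1,1)) S(1) D(1)]$ and $E[h(Y(1,1)) S(1) D(0)]$ respectively. This handles the ``undefined when $S=0$'' issue cleanly because the factor $S(1)$ kills the term before $Y(1,1)$ is ever evaluated on an impossible survival configuration.

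Subtracting and using monotonicity (Assumption~\ref{assumption1}(iii)), $D(1)-D(0) = \mathbf{1}\{D(0)<D(1)\} = \mathbf{1}\{C_1=\text{CO}\}$, so
\[
E\!\left[\frac{h(Y)SDZ}{e(X)}\right] - E\!\left[\frac{h(Y)SD(1-Z)}{1-e(X)}\right] = E\!\left[h(Y(1,1))\, S(1)\, \mathbf{1}\{C_1=\text{CO}\}\right].
\]
Since $S(1)=1$ is exactly $C_2\in\{\text{AS},\text{PR}\}$, the indicator factor becomes $\mathbf{1}\{C\in\{\text{CO-AS},\text{CO-PR}\}\}$.

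Setting $h=w(\cdot;\beta)$, I now condition on $(Y(1,1), C)$ and invoke Assumption~\ref{assumption2}:
\begin{align*}
E\!\left[w(Y(1,1);\beta)\, \mathbf{1}\{C\in\{\text{CO-AS},\text{CO-PR}\}\}\right]
&= E\!\Big[\Pr\!\big(C=\text{CO-AS}\,\big|\, C\in\{\text{CO-AS},\text{CO-PR}\},Y(1,1)\big)\\
&\qquad\qquad \cdot \mathbf{1}\{C\in\{\text{CO-AS},\text{CO-PR}\}\}\Big]\\
&= \Pr(C=\text{CO-AS}) = \Pr(S(0)=S(1)=1,\, D(0)<D(1)).
\end{align*}
Combining this with Theorem~\ref{thm1}(iii) gives the second equality and completes the proof. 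The main obstacle is bookkeeping: being careful that $Y$ is only accessed on the event $\{S=1\}$ so that the $Y(1,1)=*$ issue never arises, and that monotonicity is what converts the difference $D(1)-D(0)$ into a clean complier indicator. Once those two points are handled, everything else is an application of Assumption~\ref{assumption2} inside iterated expectation.
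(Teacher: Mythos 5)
Your proof is correct and follows essentially the same route as the paper: the paper likewise reduces the second equality to Theorem~\ref{thm1}(iii), establishes the IPW identity $E[g(Y)SDZ/e]-E[g(Y)SD(1-Z)/(1-e)]=E[g(Y(1,1))\mid S(1)=1,D(0)<D(1)]\Pr(S(1)=1,D(0)<D(1))$ for general $g$ (which you re-derive rather than cite), and then integrates the mixing probability $w(Y(1,1);\beta)$ over $Y(1,1)$ to recover $\Pr(S(0)=S(1)=1,D(0)<D(1))$. Your indicator/iterated-expectation phrasing of that last step is just a notational variant of the paper's density identity.
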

\begin{proof}
	
	See Appendix~\ref{ss:proof_prop1}.
	
\end{proof}
\noindent To understand Proposition~\ref{prop1}, we can divide both sides of \eqref{eqn:prop} by $\Pr(S(1)=1, D(0)<D(1))$, which produdes a different representation as $E[w(Y(1,1); \beta) | S(0)=S(1)=1, D(0)<D(1)] = \Pr(S(0)=1 | S(1)=1, D(0) <D(1))$. This representation implies that the expected value of the mixing probability $w(Y(1,1); \beta)$ over $Y(1,1)$ must be equal to the marginal mixing probability $\Pr(S(0)=1 | S(1)=1, D(0) <D(1))$. It is an implicit assumption by considering Assumption~\ref{assumption2}. This proposition will be important in the next section for estimating the unspecified parameter $\alpha$ in Assumption~\ref{assumption2}.  

\section{Estimation and Inference}
\label{s:method}

\subsection{Estimation of the LATE}
\label{ss:estimation}

In this section, we illustrate our method for estimating the LATE from the observed data. Tan \cite{tan2006} and Cheng and Lin \cite{cheng2017} discussed a weighting method for the estimation of the LATE in the usual IV setting in the presence of covariates by introducing the instrument propensity score $e(\mathbf{X}) = \Pr(Z=1| \mathbf{X})$. From Assumption~\ref{assumption1}, $e(\mathbf{X})$ is assumed to be strictly positive. We will work with this conditional probability for estimation. However, $e(\mathbf{X})$ is generally unknown, so it has to be estimated from the observed data. A simple approach for estimating $e(\mathbf{X})$ is using logistic regressions. For instance, we can assume a model, $e(\mathbf{X}; \gamma) = \exp( \gamma^T \mathbf{X}) / (1+ \exp( \gamma^T \mathbf{X}))$. Then, the parameter $\gamma$ can be estimated, for instance, through maximizing log-likelihood function: $\hat{\gamma}=\arg \max_{\gamma} \sum_{i=1}^{N} Z_i \log\{ e(\mathbf{X}_i) \} + (1-Z_i) \log \{ 1- e(\mathbf{X}_i)\}$. We denote the estimate by $\hat{e}(\mathbf{X}) = e(\mathbf{X}; \hat{\gamma})$. 

In addition to $e(\mathbf{X})$, the specific model of the function $w(\cdot; \beta)$ needs to be specified. For later simulation and application, we assume that the function $w(\cdot; \beta)$ has an \textit{expit} function, 
\begin{equation}
w(; \beta) = w(y; \alpha, \beta) = \frac{\exp(\alpha+ \beta y)}{1+\exp(\alpha+\beta y)} \equiv \text{expit}(\alpha + \beta y). 
\label{eqn:w}
\end{equation} 
The function $w(y; \alpha, \beta)$ depends on two parameters $\alpha$ and $\beta$, but only $\alpha$ can be estimated from the observed data while $\beta$ is fixed and known. Since, in practice, $\beta$ is not known, we conduct a sensitivity analysis of $\beta$ by examining several values of $\beta$ with a plausible range of $\beta$. We will present the sensitivity analysis in Section~\ref{s:simulation} with simulations, and also in Section~\ref{s:example} using the NICU hospital data.

From Proposition~\ref{prop1}, the parameter $\alpha$ is implicitly determined given that $\beta$ is known. By plugging the estimate $\hat{e}(\mathbf{X}_i)$ into \eqref{eqn:prop}, the proportion $\Pr(S(0)=S(1)=1, D(0)<D(1))$ can be nonparametrically estimated in two different directions. First, the proportion can be estimated by subtracting the proportion of $C_1=AS, C_2=NT$ from the proportion of $C_1=AS, C_2=NT$ or $CO$,
\begin{equation}
\hat{\Pr}(S(0)=S(1)=1, D(0)<D(1)) = \frac{1}{N}\sum_{i=1}^{N} \left\{ \frac{S_i(1-D_i)(1-Z_i)}{1-\hat{e}(\mathbf{X}_i)} - \frac{S_i(1-D_i)Z_i}{\hat{e}(\mathbf{X}_i)} \right\}.
\label{eqn:first_way}
\end{equation}
Alternatively, this proportion can be estimated as
\begin{equation}
\hat{\Pr}(S(0)=S(1)=1, D(0)<D(1)) = \frac{1}{N}\sum_{i=1}^{N} \left\{ \frac{w(Y_i;\alpha, \beta)S_i D_i Z_i}{\hat{e}(\mathbf{X}_i)} - \frac{w(Y_i;\alpha, \beta)S_i D_i (1-Z_i)}{1-\hat{e}(\mathbf{X}_i)} \right\}.
\label{eqn:second_way}
\end{equation}
Therefore, the estimate $\hat{\alpha}$ can be found by equating the two estimation equations. More formally, this can be viewed as solving the equation $h(\alpha)=0$ where
\begin{align*}
h(\alpha) &= \frac{1}{N}\sum_{i=1}^{N} \left\{ \frac{w(Y_i;\alpha, \beta)S_i D_i Z_i}{\hat{e}(\mathbf{X}_i)} - \frac{w(Y_i;\alpha, \beta)S_i D_i (1-Z_i)}{1-\hat{e}(\mathbf{X}_i)} \right\} - \left\{ \frac{S_i(1-D_i)(1-Z_i)}{1-\hat{e}(\mathbf{X}_i)} - \frac{S_i(1-D_i)Z_i}{\hat{e}(\mathbf{X}_i)} \right\} \\
&= \frac{1}{N}\sum_{i=1}^{N} \left\{ \frac{S_i (w(Y_i; \alpha, \beta)D_i + 1- D_i)Z_i}{\hat{e}(\mathbf{X}_i)} - \frac{S_i(w(Y_i; \alpha, \beta)D_i + 1-D_i)(1-Z_i)}{1-\hat{e}(\mathbf{X}_i)} \right\}
\end{align*}
The estimate $\hat{\alpha}$ then be plugged in the function $w(y; \alpha, \beta)$, and we call it $\hat{w}(y) = w(y; \hat{\alpha}, \beta)$. 

 Next, for the last step of the estimation, we replace $e(\mathbf{X})$ and $w(y;\alpha, \beta)$ by $\hat{e}(\mathbf{X})$ and $\hat{w}(y)$ in equation~\eqref{eqn:late}. Then, the LATE is nonparametrically estimated as 
\begin{align}
	\widehat{\text{LATE}}(\beta) = \frac{\frac{1}{N} \sum_{i=1}^{N}\left\{ \frac{Y_i S_i(\hat{w}(Y_i)D_i + 1-D_i)Z_i}{\hat{e}(\mathbf{X}_i)} -  \frac{Y_i S_i(\hat{w}(Y_i)D_i + 1-D_i)(1-Z_i)}{1-\hat{e}(\mathbf{X}_i)} \right\}}{ \frac{1}{N}\sum_{i=1}^{N} \left\{ \frac{S_i(1-D_i)(1-Z_i)}{1-\hat{e}(\mathbf{X}_i)} - \frac{S_i(1-D_i)Z_i}{\hat{e}(\mathbf{X}_i)} \right\} }.
	\label{eqn:late_estimate}
\end{align}

To summarize, estimation of the LATE can be viewed as a three-step procedure,
\begin{enumerate}
	\item[Step 1.] Using the instrument $Z$ and covariates $\mathbf{X}$, estimate the instrument propensity score $e(\mathbf{X})$, which is denoted as $\hat{e}(\mathbf{X})$, 
	
	\item[Step 2.] For a fixed $\beta$, estimate $\alpha$ by plugging in $\hat{e}(\mathbf{X})$ and solving $h(\alpha)=0$, 
	
	\item[Step 3.] Plug-in $\hat{\alpha}$ and $\hat{e}(X)$ into equation~\eqref{eqn:late_estimate}, and then compute an estimate $\widehat{\text{LATE}}(\beta)$. Repeat this procedure for other values of $\beta$.
	
\end{enumerate}

\noindent During the procedure, the instrument propensity score $e(\mathbf{X}_i)$ and $\alpha$ are sequentially estimated. However, in practice, this two step estimation may not work because the root of the function $h(\alpha)$ in Step 2 may not exist sometimes. Actually, we found through simulation studies that this problem occasionally arises. For instance, since $0 \leq w(;\alpha, \beta) \leq 1$, equation~\eqref{eqn:second_way} is smaller than or equal to $\frac{1}{N}\sum_{i=1}^{N} \left\{ \frac{S_i D_i Z_i}{\hat{e}(\mathbf{X}_i)} - \frac{S_i D_i (1-Z_i)}{1-\hat{e}(\mathbf{X}_i)} \right\}$ that is the estimator of the proportion $\widehat{\Pr} (S(1)=1, D(0)<D(1))$. The estimate $\widehat{\Pr} (S(1)=1, D(0)<D(1))$ must be larger than the estimate $\widehat{\Pr}(S(0)=S(1)=1, D(0)<D(1))$, however it is often smaller due to the estimation variability of the propensity score $e(\mathbf{X}_i)$. In this case, for every value of $\alpha$ in $(-\infty, \infty)$, equation~\eqref{eqn:second_way} is strictly smaller than equation~\eqref{eqn:first_way}, resulting in no solution to $h(\alpha)=0$. One way to avoid this problem is to estimate $e(\mathbf{X})$ and $\alpha$ within the same framework. We propose an approach for this simultaneous estimation in the next subsection.

\subsection{Parameter estimation based on the covariate balancing propensity score}
\label{ss:cov_balancing}

To estimate the instrument propensity score $e(\mathbf{X})$ and $\alpha$ simultaneously, we can extend the {\it covariate balancing propensity score} (CBPS) estimation proposed by Imai and Ratkivic (2014). The CBPS estimation method is designed to estimate $e(\mathbf{X})$ while optimizing covariate balance. Inverse propensity score weighting asymptotically achieves covariate balance, but CBPS tries to improve balance in finite sample. The CBPS estimation manually optimizes sample covariate balance between treated and control groups. When the model for the propensity score is not correctly specified, the covariate balancing property may not hold, so enforcing the covariate balance can increase the robustness of model misspecification. 

The CBPS estimation can be done within the framework of generalized method of moments (GMM) by minimizing covariate balance measure $\frac{1	}{N} \sum_{i=1}^{N} \left\{ \frac{Z_i \mathbf{X}_i}{e(\mathbf{X}_i)} - \frac{(1-Z_i)\mathbf{X}_i}{1-e(\mathbf{X}_i)} \right\}$. For the estimation of $\alpha$ we use $h(\alpha)$, and this function can be represented by $h(\alpha)= \frac{1	}{N} \sum_{i=1}^{N} \left\{ \frac{Z_i W_i}{e(\mathbf{X}_i)} - \frac{(1-Z_i)W_i}{1-e(\mathbf{X}_i)} \right\}$ by letting $W_i = S_i(w(Y_i; \alpha, \beta)D_i + 1-D_i)$. Since estimating $\alpha$ using $h(\alpha)$ is equivalent to minimizing the covariate balance measure for $W_i$, we can assume that $W_i$ is an additional covariate, and minimize the sample covariate balance measure for an extended covariate vector $\tilde{\mathbf{X}}_i = (\mathbf{X}_i, W_i)$. As we discussed in Section~\ref{ss:estimation}, the logistic model $e(\mathbf{X}; \gamma)$ is considered for the propensity score. We can have an extended estimator $\hat{\tilde{\gamma}} = (\hat{\gamma}, \hat{\alpha})$ by using the efficient GMM estimator proposed by Hansen\cite{hansen1982}. The GMM estimator $\hat{\tilde{\gamma}}_{GMM}$ is 
\begin{align*}
\hat{\tilde{\gamma}}_{GMM} &= \arg \min_{\tilde{\gamma}} g_{\tilde{\gamma}} (\mathbf{Z}, \mathbf{\tilde{X}})^T \Sigma_{\tilde{\gamma}} (\mathbf{Z}, \mathbf{\tilde{X}})^{-1} {g}_{\tilde{\gamma}} (\mathbf{Z}, \mathbf{\tilde{X}}) 
\end{align*} 
where ${g}_{\tilde{\gamma}} (\mathbf{Z}, \mathbf{\tilde{X}}) = \frac{1}{N} \sum_{i=1}^{N} \left\{ \frac{Z_i \mathbf{\tilde{X}}_i}{e(\mathbf{X}_i)} - \frac{(1-Z_i)\mathbf{\tilde{X}}_i}{1-e(\mathbf{X}_i)} \right\}$ and $\Sigma_{\tilde{\gamma}} (\mathbf{Z}, \mathbf{\tilde{X}}) =  \frac{1}{N} \sum_{i=1}^{N} \frac{\mathbf{\tilde{X}}_i \mathbf{\tilde{X}}_i^T}{e(\mathbf{X}_i)\{1-e(\mathbf{X}_i) \} }$. See Imai and Ratkovic\cite{imai2014} for further details on the GMM-based CBPS estimation. After acquiring the estimates $\hat{e}(\mathbf{X}_i)$ and $\hat{w}(Y_i)$, we can plug-in them into equation~\eqref{eqn:late} to obtain the LATE estimate.  

The purpose of our approach is to incorporate the restriction for $\alpha$ within the propensity score estimation in order to satisfy the equality~\eqref{eqn:prop} in Proposition~\ref{prop1}. This approach is particularly useful when there is no solution to $h(\alpha)=0$. When $h(\alpha)=0$ is solvable and $\beta$ is correctly specified, the simultaneous estimation of $e(\mathbf{X}; \gamma)$ and $\alpha$ itself does not improve the performance of separate estimation of them. However, when $\beta$ is not correctly specified, there is an additional advantage of applying our proposed method; the bias of the LATE estimator tends to be smaller. We will discuss this using simulation studies in Section~\ref{s:simulation}.

\section{Simulation}
\label{s:simulation}

In this section, we examine the small sample performance of our proposed estimator for various simulation settings. For each setting, we set the sample size $N=2000$. Also, we consider four binary covariates $\mathbf{X}_i=(X_{i1}, X_{i2}, X_{i3}, X_{i4})$ and assume that each combination has $N/16=125$ subjects. This covariate matrix is fixed throughout simulation studies. For the data generating process, the instrument $Z_i$ is generated based on the true propensity $e(\mathbf{X}_i)=\Pr(Z_i=1 | \mathbf{X}_i) = expit(0.5+0.2X_{i1}-0.2X_{i2})$, i.e. $Z_i \sim Binom(e(\mathbf{X}_i))$. Let $$
q_{C_1, C_2} = (q_{\textit{CO-AS}}, q_{\textit{CO-PR}}, q_{\textit{CO-NS}}, q_{\textit{AT-AS}}, q_{\textit{AT-PR}}, q_{\textit{AT-NS}}, q_{\textit{NT-AS}}, q_{\textit{NT-PR}}, q_{\textit{NT-NS}})$$
 be the vector of the proportions of the nine composite compliance classes $C=C_1 \times C_2$. Then, the class membership $C_i$ is generated using a multinomial distribution with $q_{C_1, C_2}$. Given $C_i$,  $D_i$ and $S_i$ are determined. For instance, when $C_i=$\textit{CO-AS} and $Z_i=0$, $D_i=0$ and $S_i=1$. We consider binary outcomes that are generated by binomial distributions. Specifically, we assume $Y_i|C_i={\textit{CO-AS}}, Z_i=0 \sim Binom(p_{{\textit{CO-AS}},0})$, $Y_i|C_i={\textit{CO-AS}}, Z_i=1 \sim Binom(p_{{\textit{CO-AS}},1})$ and $Y_i|C_i={\textit{CO-AS}}, Z_i=0 \sim Binom(p_{{\textit{CO-PR}},1})$. Therefore, the LATE is defined as $p_{{\textit{CO-AS}},1} - p_{{\textit{CO-AS}},0}$. We set $p_{{\textit{CO-AS}},0}=0.3$, and, for other compliance classes, we assume a binomial distribution $Binom(0.3)$. Given $\alpha$ and $\beta$, $p_{\textit{CO-AS}, 1}$ and $p_{{\textit{CO-PR}},1}$ are determined from Proportion~\ref{prop1}: 
\begin{align*}
p_{{\textit{CO-AS}},1} &= p_{\textit{CO-AS  or  PR}} \times expit(\alpha+\beta) \frac{q_{\textit{CO-AS}}+q_{\textit{CO-PR}}}{q_{\textit{CO-AS}}} \\
p_{{\textit{CO-PR}},1} &= p_{\textit{CO-AS  or  PR}} \times (1-expit(\alpha+\beta)) \frac{q_{\textit{CO-AS}}+q_{\textit{CO-PR}}}{q_{\textit{CO-PR}}}.
\end{align*}
where $p_{\textit{CO-AS  or  PR}} = \frac{q_{\textit{CO-AS}}/(q_{\textit{CO-AS}}+q_{\textit{CO-PR}}) - expit(\alpha)}{expit(\alpha+\beta) - expit(\alpha)}$ if $\beta \neq 0$ and $p_{\textit{CO-AS  or  PR}}$ can be an arbitrary probability if $\beta=0$.
 
In our simulation studies, we consider three simulation scenarios:
\begin{enumerate}
\item[(S1)] $\alpha=0, \beta=0$ with $q_{C_1, C_2} = (0.3, 0.3, 0.05, 0.1, 0.05, 0.05, 0.05, 0.05, 0.05)$. The true LATE is 0.2 with $p_{{\textit{CO-AS}},1}=0.5$.


\item[(S2)] $\alpha=0, \beta=3$ with $q_{C_1, C_2} = (0.4, 0.1, 0.05, 0.2, 0.05, 0.05, 0.05, 0.05, 0.05)$. The true LATE is 0.489 with $p_{{\textit{CO-AS}},1}=0.789$.

\item[(S3)] $\alpha=2, \beta=3$ with $q_{C_1, C_2} = (0.4, 0.01, 0.05, 0.29, 0.05, 0.05, 0.05, 0.05, 0.05)$. The true LATE is 0.558 with $p_{{\textit{CO-AS}},1}=0.858$.

\end{enumerate}
and also, we consider three methods:
\begin{enumerate}
\item[(a)] CBPS2 --- The simultaneous estimation of $e(\mathbf{X}_i)$ and $\alpha$ is considered as discussed in Section~\ref{ss:cov_balancing}, which emphasizes the two-step estimation procedure. 

\item[(b)] CBPS3 --- The three-step estimation procedure is considered for separate estimation. The propensity score is estimated by only considering covariates $\mathbf{X}_i$. The parameter $\alpha$ is then estimated by solving $h(\alpha)=0$ as discussed in Section~\ref{ss:estimation}.  

\item[(c)] GLM3 --- The standard logistic regression is used for estimating $e(\mathbf{X}_i)$ in the three-step estimation procedure. 
\end{enumerate}

\begin{table}
\centering
\caption{Sensitivity analysis for the LATE estimators from three approaches for estimating the instrument propensity score. }
\begin{tabular}{@{\extracolsep{4pt}}rlrr rrr c@{}}
\hline
& & \multicolumn{2}{c}{$\beta$} & \multicolumn{3}{c}{LATE} & Fail to solve \\
\cline{3-4}\cline{5-7}
& True ATE & \multicolumn{1}{c}{True} & \multicolumn{1}{c}{Assumed} & \multicolumn{1}{c}{CBPS2} & \multicolumn{1}{c}{CBPS3} & \multicolumn{1}{c}{GLM3} & $h(\alpha)=0$\\
\hline
S1 & 0.2 & 0 & -2 & -0.029 (0.044) & -0.223 (0.043) & -0.223 (0.043) & 0.000 \\
& & 0 & -1 & 0.078 (0.042) & -0.035 (0.043) & -0.035 (0.043) & 0.000 \\
& & 0 & 0 & 0.201 (0.042) & 0.201 (0.042) & 0.201 (0.042) & 0.000 \\
& & 0 & 1 & 0.323 (0.042) & 0.387 (0.045) & 0.387 (0.045) & 0.000 \\
& & 0 & 2 & 0.431 (0.044) & 0.489 (0.049) & 0.489 (0.049) & 0.000 \\[0.2cm]


S2 & 0.489 & 3 & 1 & 0.410 (0.043) & 0.332 (0.041) & 0.332 (0.041) & 0.000 \\
& & 3 & 2 & 0.455 (0.045) & 0.441 (0.041) & 0.441 (0.041) & 0.000 \\
& & 3 & 3 & 0.489 (0.049) & 0.489 (0.049) & 0.489 (0.049) & 0.000 \\
& & 3 & 4 & 0.510 (0.053) & 0.507 (0.053) & 0.507 (0.053) & 0.000 \\
& & 3 & 5 & 0.522 (0.056) & 0.515 (0.055) & 0.515 (0.055) & 0.000 \\[0.2cm]

S3 & 0.558 & 3 & 1 & 0.543 (0.052) & 0.510 (0.051) & 0.510 (0.051) & 0.295 \\
& & 3 & 2 & 0.552 (0.052) & 0.550 (0.048) & 0.550 (0.048) & 0.302 \\
& & 3 & 3 & 0.557 (0.054) & 0.564 (0.052) & 0.564 (0.052) & 0.298 \\
& & 3 & 4 & 0.562 (0.057) & 0.571 (0.055) & 0.571 (0.055) & 0.300 \\
& & 3 & 5 & 0.565 (0.058) & 0.575 (0.056) & 0.575 (0.056) & 0.296 \\
\hline
\end{tabular}
\label{tab:sim}
\end{table}

Table~\ref{tab:sim} shows the simulated sensitivity analysis results for the three simulation scenarios. From 10000 replications, the mean and standard deviation of 10000 LATE estimates are reported for each method. In the first and second scenarios, CBPS3 and GLM3 successfully solve the equation $h(\alpha)=0$ for every simulation as seen in the last column of Table~\ref{tab:sim}. When $\beta$ is correctly specified, all three methods produce unbiased estimators, but when $\beta$ is not the true value, all the estimators are biased. However, the bias of the CBPS2 estimator tends to be smaller than that of other estimators. This is because the CBPS2 method will divide the efforts invested for estimating $e(\mathbf{X}_i)$  and $\alpha$. It estimates $e(\mathbf{X}_i)$ while achieving equation~\eqref{eqn:prop} in Proposition~\ref{prop1}, but it does not directly attempt to solve $h(\alpha)=0$. This improves the overall performance of the CBPS2 estimator.

In the third scenario, the proportion of \textit{CO-AS} is $q_{\textit{CO-AS}}=0.4$ and the proportion of \textit{CO-PR} is $q_{\textit{CO-PR}}=0.01$. This small $q_{\textit{CO-PR}}$ implies that the estimate $\widehat{\Pr}(C = \textit{CO-AS})$ can be larger than the estimate $\widehat{\Pr}(C= \textit{CO-AS} \>\text{or}\> \textit{CO-PR})$ with some probability, which leads to a failure of solving $h(\alpha)=0$. When failing to solve $h(\alpha)=0$, $\alpha$ cannot be estimated in the three-step procedure; however, CBPS2 can be always applied to estimating $\alpha$. In addition, we found that CBPS3 cannot consistently estimate $\alpha$ even when $h(\alpha)=0$ is solvable. As a selected value of $\beta$ is far from the true value, $h(\alpha)$ is more difficult to solve, and thus  $\alpha$ can be over-estimated to compensate a low value of $\beta$ in our simulations (it can be under-estimated in a different simulation scenario). In Table~\ref{tab:sim}, the mean and standard deviation of the CBPS3 (and GLM3) estimator are reported only when $h(\alpha)=0$ is solvable. Even when $\beta$ is correctly specified, CBPS3 and GLM3 cannot provide an unbiased estimator since these estimators cannot be obtained for some datasets due to a failure of solving $h(\alpha)=0$. 

Finally, another tendency can be seen in the table. The LATE estimate varies less when the ratio $q_{\textit{CO-PR}}/q_{\textit{CO-AS}}$ gets smaller. This is because, as the ratio decreases, the contribution of the \textit{CO-PR} class is smaller. In our simulation scenarios, the ratio is 1 in the first scenario, 0.25 in the second and 0.025 in the third. As can be seen in Table~\ref{tab:sim}, S3 has the narrowest range of the LATE estimates.

\section{Example}
\label{s:example}

We apply our sensitivity analysis methods to estimate the causal effect of delivery at high-level NICUs on the risk of BPD to the Pennsylvania NICU data. The treatment variable $D_i$ means that infant $i$ was delivered at a high-level NICU, the survival variable $S_i$ is an indicator whether infant $i$ survived after his/her hospitalization, and the outcome variable $Y_i$ is the presence of BPD disease for infant $i$. As discussed in Section~\ref{ss:data}, the binary excess travel time indicator $Z_i$ is used as an IV. We found that living close to high-level NICUs is strongly correlated with delivery at high-level NICUs in the premature infant population (infants with a gestational age 23-37 weeks and a birth weight 400-8000g), but is comparatively weakly correlated in the VLBW infant population (infants with a birth weight 500-1500g). More precisely, in the premature infant population, the estimated probability of being delivered at high-level NICUs is 0.77 when $Z=1$, but this probability is 0.38 when $Z=0$. However, in the VLBW infant population, the probability is 0.83 when $Z=1$ and 0.70 when $Z=0$. Since infants in the VLBW infant population were high-risk premature infants, it is more likely that they were delivered at high-level NICUs no matter how far they lived from such NICUs. 

We first present a naive analysis with the Pennsylvania data that does not consider unmeasured confounding. A naive comparison without considering any covariate information (and the IV) indicates that 2.48\% of premature infants who were delivered in high-level NICUs suffer from BPD disease while  0.68\% of premature infants delivered in low-level NICUs suffer from BPD in the premature infant population. Similarly, in the VLBW population, the prevalence of BPD is 20.31\% at high-level NICUs that is much higher than  14.42\% at low-level NICUs. This naive analysis implies that being delivered at high-level NICUs is harmful and make infants have more complications. However, this analysis is misleading because there might be many surviving premature infants delivered at high-level NICUs who would not survive if they were delivered at low-level NICUs. In other words, there might be some infants who barely survived due to being delivered at high-level NICUs, but were not healthy enough to avoid later complications. 
 
 \begin{figure}
	\centering
	\includegraphics[width=160mm]{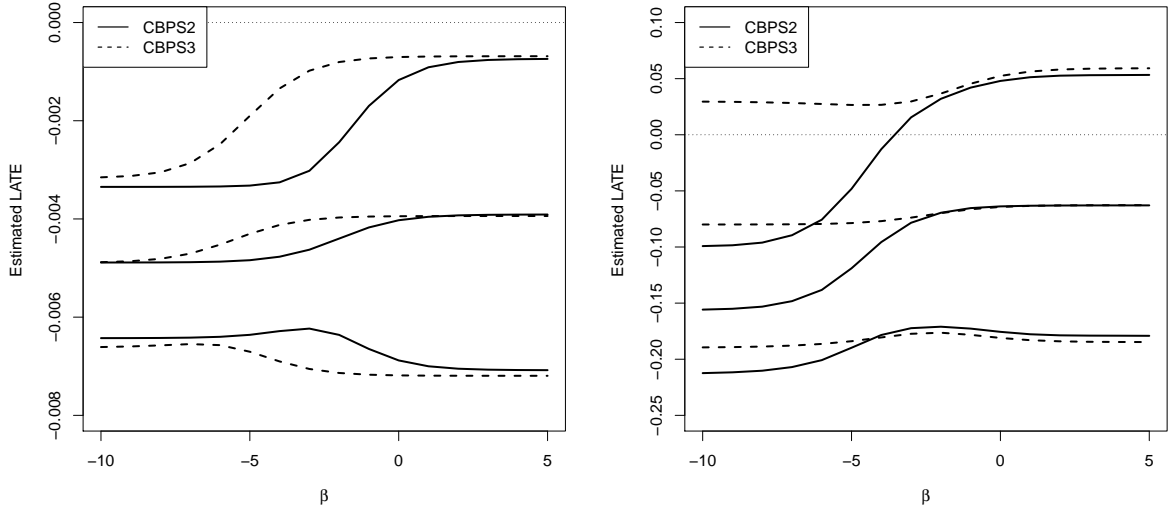}
	\caption{Sensitivity analysis estimates and 95\% confidence intervals of the LATE for the NICU hospital data.}
	\label{fig:late_sensitivity}
\end{figure}

 To make a fair comparison between high-level and low-level NICUs, we consider the combined stratum \textit{CO-AS}, and apply our proposed method to the Pennsylvania NICU data. The proportions of the \textit{CO-AS} and \textit{CO-PR} strata are estimated as 36.2\% and 0.2\% in the premature infant population and 11.6\% and 0.2\% in the VLBW infant population. The ratio between the two strata is 0.005 and 0.017 respectively. These ratios are smaller than the ratio in Simulation 3 in Section~\ref{s:simulation}, so we consider two estimation approaches for the instrumental score $e(\mathbf{X})$, CBPS2 and CBPS3. Figure~\ref{fig:late_sensitivity} shows two sensitivity analysis estimates for the two study populations for $\beta$ in $[-10, 5]$; The left plot is for the premature infant population and the right plot is for the VLBW infant population. The range of $(-\infty, \infty)$ for $\beta$ was initially chosen and examined for the analysis, but the estimated LATE for $\beta$ below -10 (or above 5) flattens out, thus we only plot the cases when $\beta$ is in the interval $[-10, 5]$. In the premature infant population, for all $\beta$ in $[-10, 5]$, the LATE and the 95\% confidence interval are below zero for both CBPS2 and CBPS3 approaches. The CBPS2 and CBPS3 provide slightly different estimate in $-7< \beta <0$, but their estimates converge at the ends of the range. In addition, CBPS2 provides the narrower 95\% confidence intervals. Contrary to the previous naive analysis, both estimation approaches show that there is evidence that delivery at high-level NICUs actually reduces the risk of BPD disease for all values of $\beta$ in the premature infant population. In the VLBW infant population, the right plot of Figure~\ref{fig:late_sensitivity} shows the similar results. The CBPS2 estimation approach provides significant evidence that, for $\beta \leq -4$, delivery at high-level NICUs reduces the risk of BPD, but for $\beta > -4$, the evidence is not significant. This is because we have a smaller sample for the VLBW infant population, and mainly because the proportion of the \textit{CO-AS} is much smaller. If our IV has a stronger effect on treatment, we may draw the same conclusion for all values of $\beta$. Interestingly, the CBPS3 estimation approach provide comparatively larger estimates for $\beta < -4$, and all the LATE estimates have similar values. This is due to the separate estimation of $\alpha$ and $e(\mathbf{X})$. We found that CBPS3 puts too much efforts to solve the equation $h(\alpha)=0$ for lower values of $\beta$, which leads to over-estimation of $\alpha$ as discussed in our simulation studies.   Therefore, the CBPS3 estimation approach is probably inadequate for low values of $\beta$.

\section{Discussion}
\label{s:discussion}

In this article, we have proposed a new method for estimating the average treatment effect when the outcome is censored by death in observational studies. Using IV models, our method conducts sensitivity analysis of the LATE, $E[Y(1,1)-Y(0,1)| C= \textit{CO-AS}]$. To estimate the LATE in a problem with censoring by death problem, we consider an assumption for the relationship between the \textit{CO-AS} and \textit{CO-PR} subpopulations, and add a model restriction to describe this relationship. Building on the CBPS estimation method of Imai and Ratkovic\cite{imai2014}, we incorporate this model restriction within the propensity score estimation procedure so that the parameters can be estimated simultaneously. As shown in our simulation studies, the simultaneous estimation can be the solution when the separate estimation fails to produce an estimate. Also, in the sensitivity analysis, it is shown that the range of the LATE estimate is much narrower when using the simultaneous estimation. We have illustrated our method with the Pennsylvania NICU data with sensitivity analysis. Also, we show that there is evidence that delivery at high-level NICUs reduces the risk of BPD. Although our method is described with binary outcomes, we note that our method can be applied for continuous or discrete outcomes.  

Also, our method can be extended to dealing with multi-valued or even continuous instruments (or treatments). Imai and Ratkovic \cite{imai2014} describes the extension of their CBPS estimation method with multi-valued instruments. In binary instrument cases, covariate balance is optimized by comparing $Z=0$ and $Z=1$ groups. Similarly, in multi-valued instruments, the propensity score can be obtained by optimizing covariate balance with an adjacent instrument value. Fong et al. \cite{fong2018} further generalizes this method for continuous instruments both parametrically and non-parametrically. In future works, we will generalize our method for continuous instruments for the LATE estimation. 
 
Our method uses an inverse probability weighting estimator to identify the LATE given that the sensitivity parameter $\beta$ is fixed and known, but alternatively, one can use outcome regression. However, in our data structure, many model specifications are needed for describing data generating process. For instance, without the instrument, Shepherd et al. \cite{shepherd2006} estimates the LATE using two different versions of modeling the data structure, and both models require at least four model assumptions. The number of the required model assumptions would rapidly increase within IV frameworks. This can make the conclusion based on outcome regression more vulnerable to model misspecifications. However, in our method, the only required model specification is about the instrument propensity score. Also, our simultaneous estimation method enables the robust estimation of the instrument propensity score by balancing covariates between the $Z=1$ and $Z=0$ groups.

\appendix

\section{Proofs of Theorem 1 and Proposition 1}
\label{s:appendix}

\subsection{Proof of Theorem~\ref{thm1}}
\label{ss:proof_thm1}

For Part (i), the IV assumption implies
\begin{align*}
& E[Y(0,1) | S(0)=1, D(0)<D(1), X]\\
&= E[Y(0,1) \cdot S(0) \cdot (D(1)-D(0)) | X] \cdot \frac{1}{\Pr(S(0)=1, D(0)<D(1)| X)} \\
&= \left\{ E[Y(0,1) \cdot S(0) \cdot (1-D(0)) | X] - E[Y(0,1) \cdot S(0) \cdot (1-D(1)) | X] \right\} \cdot \frac{1}{\Pr(S(0)=1, D(0)<D(1)| X)} \\
&= \left\{E[YS(1-D)|X, Z=0] - E[YS(1-D)|X, Z=1] \right\} \cdot \frac{1}{\Pr(S(0)=1, D(0)<D(1)| X)} \\
&= \left\{E\left[\frac{YS(1-D)(1-Z)}{1-e}\bigg| X \right] - E\left[\frac{YS(1-D)Z}{e}\bigg| X \right] \right\}\cdot \frac{1}{\Pr(S(0)=1, D(0)<D(1)| X)} \\
\end{align*}
where $e = \Pr(Z=1|X)$. 

Also, the marginal expectation is 
\begin{align*}
& E[Y(0,1) | S(0)=1, D(0)<D(1)] \\
&= \int E[Y(0,1) | S(0)=1, D(0)<D(1), X=x] \cdot dP(X=x|S(0)=1, D(0)<D(1)) \\
&= \frac{1}{\Pr(S(0)=1, D(0)<D(1))} \times \int \left\{E\left[\frac{YS(1-D)(1-Z)}{1-e}\bigg| X \right] - E\left[\frac{YS(1-D)Z}{e}\bigg| X \right] \right\} dP(X) \\
&= \frac{1}{\Pr(S(0)=1, D(0)<D(1))} \times \left\{ E\left[\frac{YS(1-D)(1-Z)}{1-e} \right] - E\left[\frac{YS(1-D)Z}{e}\right] \right\}.
\end{align*}

Similarly, for Part (ii), the marginal expectation of the potential outcome for the $C_1=CO, C_2=AS \>\text{or}\> PR$ subpopulation is 
\begin{align}
E[g(Y(1,1)) | S(1)=1, D(0)<D(1)] &= \frac{1}{\Pr(S(1)=1, D(0)<D(1))} \times \left\{ E\left[\frac{g(Y)SDZ}{e} \right] - E\left[\frac{g(Y)SD(1-Z)}{1-e}\right] \right\}
\label{eqn:g}
\end{align}
for any function $g(\cdot)$. Also, Assumption~\ref{assumption2} implies 
\begin{align*}
w(y;\beta) &= \frac{f(Y(1,1)=y, S(0)=1, D(0)<D(1))}{f(Y(1,1)=y, S(1)=1, D(0)<D(1))}.
\end{align*}
Therefore, the marginal expectation of the potential outcome for the $C_1=CO, C_2=AS$ subpopulation is
\begin{align*}
&E[Y(1,1) | S(0)=1, D(0)<D(1)] \\
&= \int y \cdot f(Y(1,1)=y \>| S(0)=1, D(0)<D(1)) \> dy \\
&= \int y \cdot\frac{f(Y(1,1)=y, S(0)=1, D(0)<D(1))}{\Pr(S(0)=1, D(0)<D(1))} \> dy\\
&= \int y \cdot\frac{f(Y(1,1)=y, S(1)=1, D(0)<D(1)) \cdot w(y; \beta)}{\Pr(S(0)=1, D(0)<D(1))} \> dy\\
&=  \frac{\Pr(S(1)=1, D(0)<D(1))}{\Pr(S(0)=1, D(0)<D(1))} \int y \cdot w(y; \beta)\cdot  f(Y(1,1)=y \>| S(1)=1, D(0)<D(1))  \> dy \\
&= \frac{\Pr(S(1)=1, D(0)<D(1))}{\Pr(S(0)=1, D(0)<D(1))} \cdot E[Y(1,1) w(Y(1,1); \beta) | S(1)=1, D(0)<D(1)].
\end{align*}
If we choose $g(y) = y w(y;\beta)$ in equation~\eqref{eqn:g}, then 
$$
E[Y(1,1) | S(0)=1, D(0)<D(1)]  = \frac{E\left[\frac{Yw(Y;\beta)SDZ}{e}\right] - E\left[ \frac{Yw(Y;\beta)SD(1-Z)}{1-e}\right]}{\Pr(S(0)=1, D(0)<D(1))}.
$$
Derivation of Part (iii) is similar to derivation of Part (i). 

\subsection{Proof of Proposition~\ref{prop1}}
\label{ss:proof_prop1}

In Proposition~\ref{prop1}, the second equality is the same as Part (iii) in Theorem~\ref{thm1}. We will derive the first equality.
 
We have the following equality: when $Y(1,1)=y$,
\begin{align*}
&\Pr(S(0)=1 | S(1)=1, D(0)<D(1), Y(1,1)=y) \\
&\times f(Y(1,1)=y | S(1)=1, D(0)<D(1)) \\
&\times \Pr(S(1)=1, D(0)<D(1)) \\
&=f(Y(1,1)=y | S(0)=1, D(0)<D(1)) \times \Pr(S(0)=1, D(0)<D(1)).
\end{align*}
After taking integration over $y$, we have
\begin{align}
E(w(Y(1,1); \beta) | S(1)=1, D(0)<D(1)) \times \Pr(S(1)=1, D(0)<D(1)) =\Pr(S(0)=1, D(0)<D(1)).
\label{eqn:equality}
\end{align}
From equation~\eqref{eqn:g}, we have 
$$
E(w(Y(1,1); \beta) | S(1)=1, D(0)<D(1)) = \frac{E\left[\frac{w(Y;\beta)SDZ}{e}\right] - E\left[ \frac{w(Y;\beta)SD(1-Z)}{1-e}\right]}{\Pr(S(0)=1, D(0)<D(1))}
$$
and equation~\eqref{eqn:equality} can be represented by $E\left[\frac{w(Y;\beta)SDZ}{e}\right] - E\left[ \frac{w(Y;\beta)SD(1-Z)}{1-e}\right] = \Pr(S(0)=1, D(0)<D(1)).$

\nocite{*}

\bibliographystyle{plain}
\bibliography{bpd}%

\end{document}